\documentclass[copyright,creativecommons]{eptcs}

\usepackage[T1]{fontenc}
\usepackage{times}
\usepackage{amsfonts}
\usepackage{listings}
\lstset{basicstyle=\ttfamily}
\usepackage{wrapfig} 
\usepackage{amsmath}
\usepackage{amsthm}
\usepackage{amssymb}

\newtheorem{lemma}{Lemma}[section]

\newtheorem{theorem}{Theorem}[section]
\newtheorem{hypo}{Condition}[section]
\newtheorem{definition}{Definition}[section]

\newcommand{\R}{\mathbb{R}}

\newcommand{\foo}{\hbox{\sl foo}}
\newcommand{\fooa}{\hbox{\sl foo}_a}
\newcommand{\fooA}{\hbox{\sl foo}_A}
\newcommand{\foob}{\hbox{\sl foo}_b}
\newcommand{\fooB}{\hbox{\sl foo}_B}
\newcommand{\ListFoo}{\hbox{\sl listFoo}}
\newcommand{\pe}{\hbox{\sl pathestimate}}
\newcommand{\graph}{\hbox{\sl graph}}
\newcommand{\cnt}{\hbox{\sl count}}
\newcommand{\mrk}{\hbox{\sl mark}}
\newcommand{\Cone}{C1}
\newcommand{\Ctwo}{C2}
\newcommand{\Cthree}{C3}
\newcommand{\Cfour}{C4}
\newcommand{\Kn}{k_{N^*}}
\newcommand{\En}{\epsilon_{N^*}}
\newcommand{\AND}{\mathrel{\wedge}}

\lstdefinelanguage{pscode}
  {morekeywords={while,for,return},
  sensitive=false,
  morecomment=[l]{//},
  morecomment=[s]{/*}{*/},
  morestring=[b]",
}

\renewenvironment{proof}{\noindent {\bf Proof}\quad}{\qed}
\title{A non-local method for robustness analysis of\\ floating point programs
      \thanks{This work has been partially supported by the project ANR-09-BLAN-0345-02  CPP.}}
\author{Ivan Gazeau, Dale Miller, and Catuscia Palamidessi\\
INRIA and LIX, Ecole Polytechnique}

\begin{document}
\maketitle

\begin{abstract}
Robustness is a standard correctness property which intuitively means
that if the input to the program changes less than a fixed small
amount then the output changes only slightly.
This notion is useful in the analysis of rounding error for floating
point programs because it helps to establish bounds on output errors
introduced by both measurement errors and by floating point
computation.
Compositional methods often do not work since key constructs---like
the conditional and the while-loop---are not robust.
We propose a method for proving the robustness of a while-loop.
This method is non-local in the sense that instead of breaking the
analysis down to single lines of code, it checks certain global
properties of its structure.
We show the applicability of our method on two standard algorithms:
the CORDIC computation of the cosine and Dijkstra's shortest path
algorithm.\\ 

\noindent {\bf Keywords: } Program analysis,
floating-point arithmetic,
robustness to errors.
\end{abstract}

\section{Introduction}



Programs using floating point arithmetic are often used for critical
applications and it is therefore fundamental to develop methods to
establish the correctness of such programs.  A central problem in
dealing with floating point programs is the propagation of errors due
to the digitization of analog quantities and the introduction of
floating point errors during computation.  As is well known, floating
point arithmetic on these representations is quite different from real
number arithmetic: for example, addition is neither commutative nor
associative \cite{goldberg91surveys}.


The developers of floating point programs would like to think in terms
of real number semantics instead of the more ad hoc and complicated
semantics given by some specific definition of floating point
arithmetic, such as the IEEE standard 754 \cite{ieee08}.  A central
problem in trying to reason about floating point programs is that in
dealing with non-continuous operators such as the conditional and the
while-loop, floating point errors can result in what appears to be
erratic behavior.  The problem is that these constructs are in general
\emph{non-robust}: small variations in the data can cause large
variations in the results.

When the program contains non-robust operators, traditional
compositional methods do not work well. Decomposing the
correctness of a looping program using Hoare triples, for example,
usually requires either introducing abstractions (e.g., approximations)
which can then make conclusions too imprecise, or to undergo a very
complex and intricate proof.

In this paper, we will take a different approach: we shall describe
some programs where such erratic behavior is recognized and find a way
to reason and bound all of that behavior.  By moving away from the
reasoning using Hoare's style emphasis on local and compositional
analysis of a looping program, we are able to avoid reasoning about
individual erratic behaviors: instead, we will treat such behaviors as
an aggregate and try to bound the behavior of that aggregate.

To illustrate such a possibility in reasoning, consider Dijkstra's
minimal path algorithm \cite{dijkstra59}.  This greedy algorithm moves
from a source node to its neighbors, always picking the node with the
least accumulated path from the source.  If one makes small changes to
the distances labeling edges, then the least path distance will change
also by a small amount: that is, this algorithm is continuous.
However, the actual behavior of the loop and the marking of subsequent
nodes can vary greatly with small changes to edge lengths.  Our
approach to reasoning will allow us to view all of these apparently
erratic choices of intermediate paths as an aggregate on which we are
able to establish the robustness of the entire algorithm.

\paragraph{Plan of the paper}
In the next section we introduce the concept of robustness and we
relate it to the notions of continuity and $k$-Lipschitz.
Section~\ref{sec:control} contains our main contribution: a schema for
reasoning about robustness in programs 
and its correctness. We then show the applicability of our proposal in
two main examples: The CORDIC algorithm for computing cosine,
presented in Section~\ref{sec:cordi}, and Dijkstra's shortest-path
algorithm, presented in Section~\ref{sec:dijkstra}.  In
Section~\ref{sec:related} we discuss some related
work. Section~\ref{sec:conclude} concludes and discusses some future
lines of research.

\section{Robustness of floating-point programs}

Robustness is a standard concept from control theory
\cite{pettersson96dc,parsec09}.  In the case of programming languages,
there are two definitions of robustness that have been considered.
One definition used by Chaudhuri et al \cite{chaudhuri10popl}
considered robustness to be based on continuity.  Later Chaudhuri et
al \cite{chaudhuri11sigsoft} considered a stronger notion of
robustness, namely the $k$-Lipschitz property: that is, 
changes to the input to a program lead to only proportionally bounded
changes to the output.
Another approach was used by Majumdar et al in
\cite{majumdar09rtss,majumdar10memocode} where robustness is
formulated as ``if the input of the program changes by an amount less
than $\epsilon$, where $\epsilon$ is a {\em fixed} constant, then the
output changes only slightly."  In our paper, we propose a more
flexible and general notion of robustness that generalizes both of
these concepts.  We now motivate and explain our notion of robustness
in more detail.

The notions of robustness considered in
\cite{chaudhuri10popl,chaudhuri11sigsoft} are mainly useful for
\emph{exact semantics}, namely when we do not take into account the
errors introduced by the representation and/or the computation. In
this case, the only deviation  comes from the  error of the input. The
continuity property, that for a function $f$ on reals is defined as:  
\[\forall \epsilon>0\; \exists \delta\; \forall i,i'\in \R\;  \;
|i-i'|<\delta \Rightarrow |f(i)-f(i')| < \epsilon\]
ensures that the correct output can be approximated when we can
approximate the input closely enough.   This notion of robustness,
however, is too weak in many settings, because a small  variation in
the input can cause an unbounded change in the output. The
$k$-Lipschitz property, defined as
\[\forall i,i' \in \R\; \;|f(i)-f(i')| \leq k |i-i'|\]
amends this problem because it bounds the variation in the output
linearly by the variation in the input.



In our setting, however, the $k$-Lipschitz property is too
strong. This is due to the following  reasons:
\begin{enumerate}
\item If we consider a \emph{finite precision semantics}, like
  floating point implementations, the constant factor $k$ can become
  much bigger than the one optimal for the exact semantics. For
  instance, assume that the available representations are the numbers
  in the set $\{ k\,2^{-32} | k \in \mathbb{Z}\}$ and rounding is done
  by taking the lower value, and observe that a function like $f~:~x
  \mapsto 2^{-4}x$, which is $2^{-4}$-Lipschitz in the exact semantics,
  is only $1$-Lipschitz in this approximate semantics. Indeed, there
  exist two values that differ by just $2^{-32}$ and return a result
  that differ by $2^{-32}$. For example, take $1$ and  $1- 2^{-32}$:
  we have that $f(1)=2^{-4}$ and $f(1- 2^{-32})=2^{-4}-2^{-36}$, but
  the   second result will be rounded down to $2^{-4}-2^{-32}$.

\item There are algorithms that have a desired precision $e$ as a
  parameter and are considered correct as long as the result differs
  by at most $e$ from the results of the mathematical function they
  are meant to implement.  A program of this kind may be discontinuous
  (and therefore not $k$-Lipschitz) even if it is considered to be a
  correct implementation of a $k$-Lipschitz function.  The phenomenon
  is illustrated by the following program $f$ which is meant to
  compute the inverse of a strictly increasing function
  $g:\R^+\rightarrow \R^+$ whose inverse is $k$-Lipschitz for some
  $k$.
{\begin{lstlisting}[language=pscode]
f(i){ y=0;
      while(g(y) < i){
	   y = y+e; }
      return y; }
\end{lstlisting}}   
 
The program $f$ approximates  $g^{-1}$ with precision $e$  in the sense that 
\[ 
\forall x\in \R^+ \; f(x) -e \leq g^{-1}(x) \leq f(x)
\]
Given the above inequality, we would like to consider the  program $f$
as robust, even though the function it computes is discontinuous (and
hence not $k$-Lipschitz, for any $k$). 

\end{enumerate}

These two observations lead us to define another property,
$P^1_{k,\epsilon}$, to capture robustness:  
\[\forall i,i' \in \R, |f(i)-f(i')| \leq k |i-i'| + \epsilon \]
This property amends the two previous problems by setting $\epsilon$ to
$2^{-32}$ in the first example and to $e$ in the second example.
It also extends the usual definition of the $k$-Lipschitz property, which can be expressed as $P^1_{k,0}$. 

Now, we want to extend this definition to allow for several variables
and for other metric spaces besides $\R$: e.g., probability
distributions, intervals arithmetic etc.  Thus, we consider, instead,
two metric spaces: one for input ($I$, $d_I$) and the other for the
return value ($R$,$d_R$).  Hence, our robustness property
$P^2_{k,\epsilon}$ becomes
\[\forall i,i' \in I, d_R(f(i),f(i')) \leq k d_I(i,i') + \epsilon \]

Finally, since we are studying small deviation, it is not useful to
get this property for any $i$ and $i'$ in $I$ but rather when they are
close: i.e., $d_I(i,i')\le \delta$, for suitable values $\delta\in\R^+$.
In convex spaces, this property can be easily extended to  
pairs of inputs  having distance more than $\delta$ by using intermediate
values.  So, finally, in this paper we propose the property $P_{k,\epsilon,\delta}$, 
described in the following definition.

\begin{definition}\label{def:k-epsilon-delta}
Let $I$ and $R$ be metric spaces with distance $d_I$ and $d_R$
respectively, $f : I \to R$ a function, $k,\epsilon\in\R^+$, and let
$\delta\in\R^+\cup\{+\infty\}$.  We define the
property $P_{k,\epsilon,\delta}$ for the function $f$ as follows: 
\[\forall i,i' \in I,\quad d_I(i,i') \leq \delta\implies 
                           d_R(f(i),f(i')) \leq k d_I(i,i') + \epsilon \]
\end{definition}


\section{A schema and its correctness}\label{sec:control}
The main characteristic  of our schema is to subdivide the code into
several parts instead of analyzing it line by line. 
Our template, which we show in a moment, divides the data structures
in an algorithm into two parts, called $A$ and $B$.  Here, $A$ is the
witness to the progress of the algorithm: in particular, the stopping
condition will only depend on $A$ (and the input).   The structure $B$
is used to accumulate results that provide the answer when the
stopping condition is satisfied.

\subsection{The schema structure definition}\label{sec:schema}
Instead of
presenting a formal definition of program schema and matching of
code, we illustrate these with the schema in Figure~\ref{fig:main}.  

\begin{wrapfigure}{l}{60mm}
\begin{lstlisting}[language=pscode]
foo(i){
     a = a0;
     b = b0;
     while(S(i,a)){
	  c = O(a,b,c,i);
	  a = M(a,c);
	  b = N(i,b,c);
     }
     return b; }
\end{lstlisting}
\caption{The main template}\label{fig:main}
\end{wrapfigure}

Here, the schema variables {\tt a}, {\tt b}, {\tt c}, etc, denote
tuples of program variables such that no program variable occurs
twice among these schema variables. Program expressions such as 
\begin{lstlisting}
c = O(a,b,c,i);
\end{lstlisting}
denotes a program phrase that computes new values for the variables denoted
by {\tt c} from values of variables in the tuples {\tt a}, {\tt b},
{\tt c}, and {\tt i}.  The actual computation here will be denoted by
{\tt O}.  This looping program initializes the variables in {\tt a} and
{\tt b} with the values in the tuples {\tt a0} and
{\tt b0}, respectively.  The stopping condition for the loop is given
by the boolean valued expression {\tt S(i,a)} and the result of the
program is the tuple of values denoted by the variables in {\tt b}. 

We shall assume that all program variables are typed in the usual way:
variables may range over the values in their associated type.  Our
analysis of the metric properties of a looping program will, however,
consider that tuples of variables, for example, {\tt a} and {\tt b} in
Figure~\ref{fig:main}, range over some {\em metric space} on the
Cartesian product of the variables in the tuple.

\subsection{A sufficient condition for robustness}\label{sec:robustness}


\begin{wrapfigure}{r}{60mm}
\begin{lstlisting}[language=pscode]
ListFoo(i){
     a = a0;
     b = b0;
     j = 0;
     while(! S(i,a)){
	  c = O(a,b,c,i);
          j = j+1;
          l[j] = c;
	  a = M(a,c);
	  b = N(i,b,c); }
     return l; }
\end{lstlisting}
\caption{Collecting {\tt c} values in a list}\label{fig:listfoo}
\end{wrapfigure}

We shall now prove that a program having the generic structure of $\foo$
given in Figure~\ref{fig:main} has, under certain conditions, the
property $P_{k,\epsilon,\delta}$ for some $k,\epsilon,\delta$.

The aim of our method is to postpone the analysis of the exact
semantics of commands as far as possible. In order to begin the
analysis without specific knowledge of this semantics, we need to
manipulate other programs made from the functions $O$, $M$, and $N$
that have been identified.  For example, the program $\ListFoo$ in
Figure~\ref{fig:listfoo} will be used to extract the list of values
of $c$ obtained for a particular execution of $\foo$ with input $i$.
The new lines added to $\ListFoo$ will assume the usual semantics for
natural numbers.




We now define two new programs.  The first is the 
$\foob$ program given below: it has the same shape as $\foo$
but instead of setting {\tt c} by the computation of {\tt O(a,b,c,i)}, it
sets {\tt c} with the values of a list given in input. Naturally, the stop
condition for the loop is now that all elements of the list have been
accessed. Note that since $a$ was just used in the computation of $O$,
the commands affecting $a$ are now useless and can be removed. 
\begin{lstlisting}[language=pscode]
foo_b(l,i){
//   a = a0;
     b = b0;
     for(int j = 0; j < l.length; j++ ){
	  c = l[j];
//	  a = M(a,c);
	  b = N(i,b,c); }
     return b; }
\end{lstlisting}
We have used Java-style instructions such as $l.\mathit{length}$ for the
length of the list $l$ and $l[j]$ for the $j^{th}$ element of the list
$l$.  (The $//$ syntax is used to form a comment.)  We define the new
function $\fooB(i,i') = \foob(\ListFoo(i),i')$.  
Notice that $\fooB(i,i)=\foo(i)$. 

The second program $\fooa(l)$ is the same program as $\foob$ except
that $a$ is returned instead of $b$. In this program, the lines where
$b$ is set are now useless. 
\begin{lstlisting}[language=pscode]
foo_a(l){
     a = a0;
//   b = b0;
     for(int j = 0; j < l.length; j++ ){
	  c = l[j];
	  a = M(a,c);
//	  b = N(i,b,c);
     }
     return a; }
\end{lstlisting}
Finally, we define $\fooA(i) = \fooa(\ListFoo(i))$. 
The two function $\fooA$ and $\fooB$ and relations between them will
be used to indirectly analyze the program $\foo$.


In what follows, we use the following conventions: the domain of the
variables $a$, $b$, $c$, and $i$ are $A$, $B$, $C$ and $I$,
respectively, and $a0$ and $b0$ are some determined constants of type
$A$ and $B$ respectively.  For every type $X$, the expression $X^*$
denote the type of lists of type $X$.

We now introduce four conditions that need to hold to prove that the
$\foo$ program satisfies $P_{k,\epsilon,\delta}$ for appropriate
values of $k$, $\epsilon$, and $\delta$.  These conditions apply to
{\em eight} parameters: namely, $\delta,\Kn,\En,K_A,\epsilon_2$,
$K_s,\epsilon_s,\epsilon_t$.  Condition $\Cone$ expresses the property
$P_{\Kn,\En,\delta}$ for the transformed program $\fooB$, condition
$\Ctwo$ expresses the fact that there is a relationship between the
values stored in $A$ and the values stored in $B$, and condition
$\Cthree$ and $\Cfour$ address the stability of the stop condition
$S(i,a)$.

\begin{hypo}[\Cone]\label{C1} \quad
$\forall l \in C^*. P_{\Kn,\En,\delta}(\lambda z. \foob(l,z))$.
\end{hypo}


The next condition states that whenever two inputs $i$ and $i'$ are
within a $\delta$ of each other then it is the case that if their
images in $A$ (under $\fooa$)  are close, then their images in $B$
(under $\foob$) are close.

\begin{hypo}[\Ctwo]\label{C2}
\[
\forall i_1,i \in I, d_I(i,i_1) \leq \delta \implies
d_B( \fooB(i,i) , \fooB(i_1,i)) \leq k_A  d_A( \fooA(i_1) , \fooA(i)) + \epsilon_2
\]
\end{hypo}

The stopping condition $S$ should satisfy the following two
conditions.  The first expresses that the boundary of the region $\{a\;
|\; S(i,a)\}$ cannot vary too much.
\begin{hypo}[\Cthree]\label{C3}
\[
\forall a \in A, \forall i,i' \in I,
d_I(i,i') \leq \delta \AND 
S(i',a) \implies \exists a' \in A, d_A(a,a') \leq k_s d_I(i',i) + \epsilon_s \AND 
S(i,a') \]
\end{hypo}

The following condition on $S$ states that the diameter of the region
$\{a\; |\; S(i,a)\}$ is as small as the desired precision.
\begin{hypo}[\Cfour]\label{C4}
\[
\forall a,a' \in A,\forall i \in I, S(i,a) \AND S(i,a') \implies d_A(a,a') \leq \epsilon_t\]
\end{hypo}

Finally, our main theorem is the following.
\begin{theorem}
If the program $\foo$ terminates and the conditions \Cone, \Ctwo,
\Cthree, and \Cfour\ hold, then $P_{k_0,\epsilon_0,\delta}$ holds for
the function computed by $\foo$ with $k_0= \Kn + k_A k_s $ and
$\epsilon_0= \En + k_A (\epsilon_s + \epsilon_t) + \epsilon_2$.
\end{theorem}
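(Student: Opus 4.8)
The plan is to fix two inputs $i, i' \in I$ with $d_I(i, i') \le \delta$ and to bound $d_B(\foo(i), \foo(i'))$ by inserting a hybrid term that separates the two distinct sources of deviation. Using the identities $\foo(i) = \fooB(i,i)$ and $\foo(i') = \fooB(i', i')$, I would introduce the intermediate value $\fooB(i', i) = \foob(\ListFoo(i'), i)$, which replays the sequence of $c$-values generated by the input $i'$ but feeds $i$ to the accumulator $N$. The triangle inequality then gives
\[
d_B(\foo(i), \foo(i')) \le d_B(\fooB(i,i), \fooB(i', i)) + d_B(\fooB(i', i), \fooB(i', i')).
\]
The point of this decomposition is that the second term compares two runs over the \emph{same} list, differing only in the replay input, whereas the first term compares two runs over \emph{different} lists with the same replay input; these are exactly the situations controlled by \Cone\ and \Ctwo.

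For the second term, both arguments share the list $l = \ListFoo(i')$, so applying \Cone\ to $\lambda z.\,\foob(l,z)$ with $z = i$ and $z' = i'$ (legitimate since $d_I(i,i') \le \delta$) bounds it by $\Kn\, d_I(i,i') + \En$. For the first term, \Ctwo\ with $i_1 := i'$ bounds $d_B(\fooB(i,i), \fooB(i',i))$ by $k_A\, d_A(\fooA(i'), \fooA(i)) + \epsilon_2$. Combining these, the whole problem reduces to establishing the auxiliary bound
\[
d_A(\fooA(i), \fooA(i')) \le k_s\, d_I(i,i') + \epsilon_s + \epsilon_t,
\]
after which the stated constants $k_0 = \Kn + k_A k_s$ and $\epsilon_0 = \En + k_A(\epsilon_s + \epsilon_t) + \epsilon_2$ drop out by arithmetic.

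To prove the auxiliary bound I would combine the termination hypothesis with \Cthree\ and \Cfour. Since $\foo$ terminates on each input and $\fooA$ merely replays the recorded $c$-values through the same update $M$, the value $\fooA(i)$ is the final contents of $a$, so on exit the stopping condition gives $S(i, \fooA(i))$, and likewise $S(i', \fooA(i'))$. Writing $a := \fooA(i')$, which satisfies $S(i', a)$, condition \Cthree\ (using $d_I(i,i') \le \delta$) produces a witness $a'$ with $S(i, a')$ and $d_A(a, a') \le k_s\, d_I(i', i) + \epsilon_s$. Since both $a'$ and $\fooA(i)$ satisfy $S(i, \cdot)$, condition \Cfour\ gives $d_A(a', \fooA(i)) \le \epsilon_t$, and one more triangle inequality, together with the symmetry of $d_I$, yields the claim.

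I expect the main obstacle to be conceptual rather than computational: choosing the hybrid term $\fooB(i', i)$ precisely so that the two resulting gaps each have the exact shape demanded by \Cone\ and \Ctwo, and, in the auxiliary bound, justifying that $\fooA(i)$ genuinely coincides with the final $a$ of the terminating run of $\foo$ so that $S(i, \fooA(i))$ may legitimately be invoked. This last point is where the termination hypothesis and the faithfulness of the list-recording construction ($\ListFoo$ and $\fooa$) are essential; once it is secured, the remaining steps are routine applications of the triangle inequality and the four conditions.
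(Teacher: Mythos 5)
Your proposal is correct and follows essentially the same route as the paper's own proof: the hybrid term $\fooB(i',i)$ (the paper's $\fooB(i_0,i)$) splits the deviation so that \Cone\ handles the same-list comparison and \Ctwo\ the different-list comparison, while \Cthree\ and \Cfour\ combined with the terminating run's stopping values $S(i,\fooA(i))$ and $S(i',\fooA(i'))$ give exactly the auxiliary bound $d_A(\fooA(i),\fooA(i')) \leq k_s\, d_I(i,i') + \epsilon_s + \epsilon_t$. The constants then fall out identically, so there is nothing to add.
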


\begin{proof}
In the proof, we will use these two observations:
\begin{enumerate}
\item \label{O2} Since $\ListFoo(i)$ is obtained from the computation of
  $\foo(i)$, and since $\fooB(i,i')$ replaces the result of $O$ by this
  list, if we compute $\fooB(i,i)$ we are replacing each value for $c$
  by itself. Therefore we have that $\foo(i) = \fooB(i,i)$. 


\item \label{O5} In the execution of $\foo(i)$, the final value of $a$
  that satisfies the stopping condition $S(i,a)$ is $\fooA(i)$. 
\end{enumerate}

By the observation~\ref{O2}, proving the theorem is equivalent to
proving 
\[\forall i,i0 \in I, d_I(i,i0) \leq \delta \implies
d_B(\fooB(i,i),\fooB(i0,i0)) \leq k_0 d_I(i,i0) + \epsilon_0.\]
By condition \Cone, choosing $l=\ListFoo(i0)$, we have 
\[ \forall i,i0 \in I,d_I(i,i_0) \leq \delta \implies
d_B( \foob(\ListFoo(i0),i0), \foob(\ListFoo(i0),i))\leq \Kn d_I(i,i0) + \En.\]
By definition of $\fooB$, we have
\begin{equation}\label{D}
\forall i,i0 \in I,d_I(i,i_0) \leq \delta \implies
d_B(\fooB(i0,i0),\fooB(i0,i)) \leq \Kn d_I(i,i0) + \En.
\end{equation}
From observation~\ref{O5}, $S(i0,\fooA(i0))$ holds.
By condition~\Cthree \, (instantiating $i'$ with $i0$) we derive that:
\begin{equation}\label{A}
\forall i,i0 \in I,d_I(i,i_0) \leq \delta \implies
\exists a' \in A, d_A(\fooA(i0),a') \leq k_s d_I(i,i0) + \epsilon_s \AND S(i,a').
\end{equation}
Hence, by observations~\ref{O5} and ~\ref{O2}, $S(i,\fooA(i))$ also
holds. From inequality~\eqref{A} and condition~\Cfour, we derive
\begin{equation}
d_A(a',\fooA(i)) \leq \epsilon_t.
\end{equation}
From the last inequality and from  inequality~\eqref{A}, we derive, using
the triangle inequality
\begin{equation}\label{B}
 d_A(\fooA(i0),\fooA(i)) \leq  k_s d_I(i,i0) + \epsilon_s + \epsilon_t.
\end{equation}
From condition~{\Ctwo}  and inequality~\eqref{B}, we have
\begin{equation}\label{C}
\forall i,i0 \in I, d_I(i,i0) \leq \delta \implies
d_B(\fooB(i0,i),\fooB(i,i)) \leq k_A (k_s d_I(i,i0) + \epsilon_s + \epsilon_t) +  \epsilon_2.
\end{equation}
From inequalities~\eqref{D} and \eqref{C}, using the triangle inequality, we derive
\[
\begin{array}{c}\label{eq:E}
\forall i,i0 \in I, d_I(i,i0) \leq \delta \\  \implies   \\
d_B(\fooB(i,i),\fooB(i0,i0)) \leq  \Kn d_I(i,i0) + \En + k_A (k_s d_I(i,i0) + \epsilon_s + \epsilon_t)  + \epsilon_2.
\end{array}
\]
Finally, we define $\epsilon_0= \En + k_A (\epsilon_s + \epsilon_t) +
\epsilon_2$ and $k_0 = \Kn + k_A k_s $.
\end{proof}

\section{Example:  the CORDIC algorithm for computing cosine}\label{sec:cordi}
In this section  we apply our method to a program implementing  the CORDIC algorithm~\cite{volder59}, and we prove that  it is
$P_{k,\epsilon,\infty}$.

CORDIC (COordinate Rotation DIgital Computer) is a class of simple and
efficient algorithms to compute  hyperbolic and trigonometric
functions using only basic arithmetic (addition, subtraction and
shifts), plus table lookup. The notions behind this computing
machinery were motivated by the need to calculate the trigonometric
functions and their inverses in real time navigation systems. Still
now-a-days, since the CORDIC algorithms  require  only  simple integer
math, CORDIC is the preferred implementation of math functions on
small  hand calculators.

CORDIC is a successive approximation algorithm:  A sequence of
successively smaller rotations based on binary decisions drives the
algorithm towards 
the value we want to find.  The CORDIC version illustrated in the
program below  computes the cosine of any angle in $[0,\pi/2]$.  

{
\lstinputlisting[language=C]{cordic_simplified.c}}
Note that this program makes call to trigonometric functions like
cosine itself. But in the actual implementation, as it is explained in
the comments, these calls (that are done on values divided by successive powers of two) are
stored in a database so that no computation of these functions is actually done.

\subsection{Scheme instantiation}
To apply our method, we have first of all to instantiate the schema variables  $A$, $B$, $C$  (cf. Section~\ref{sec:schema}) with a suitable partition of the variables of the program.
The variables $I$ are determined: they must be instantiated  with the variables which represent the  
input. 

In this example the partition for the variables will be the
following.

{
\begin{lstlisting}
A := double theta;
B := double x,y;
C := double sigma;
I := double beta;
\end{lstlisting}}

We now must define a suitable metric on the types of the  variables in $A$ and $B$. We choose the  following:
\begin{itemize}
\item $d_A$ is the usual distance on $\R$.
\item $d_B$ is the $L_2$ norm on $\R^2$.
\end{itemize}

Now we need to identify the stopping condition $S(i,a)$. This is given by:
\begin{lstlisting}
S(beta,theta) := | theta - beta | <= e
\end{lstlisting}

Then, we need to instantiate the functions $M(a,c)$, $N(i,b,c)$, $O(a,b,c,i)$ of the schema with suitable regions of code.
We choose these as follows:

{
\lstinputlisting[language=C]{cordic_matching.c}} 

Finally, we need to prove that the conditions \Cone, \Ctwo, \Cthree, and \Cfour \, (cf. Section~\ref{sec:robustness}) are satisfied. 

\subsection{Proofs of the conditions}

\paragraph{\Cone: $\forall l \in C^*. P_{\Kn,\En,\delta}(\lambda z. \foob(l,z))$}
This condition can be proved for the following program by such standard
techniques as abstract interpretation or Hoare triples.

{
\lstinputlisting[language=C]{cordic_N.c}}

\paragraph{\Ctwo: 
$\forall i_1,i \in I, d_I(i,i_1) \leq \delta \implies
d_B( \fooB(i,i) , \fooB(i_1,i)) \leq k_A  d_A( \fooA(i_1) , \fooA(i)) + \epsilon_2$}
This part of the proof is rather technical. The interested reader can
find it in the appendix of \cite{gazeau12hal}. 
The proof of ~\Ctwo \, is the most difficult part of this example. We have proved it ``by hand'', and we do not
claim that there is an easy way to automate it.  
However, this proof points out that  we can prove
the intended property  without considering the whole semantics of the program,  but
just the relevant properties. 

\paragraph{\Cthree: $\forall a \in A, \forall i,i' \in I,
d_I(i,i') \leq \delta \AND 
S(i',a) \implies \exists a' \in A, d_A(a,a') \leq k_s d_I(i',i) + \epsilon_s \AND 
S(i,a')$}
The instantiation of $S(i,a)$ corresponds to $|i-a| \leq e$, so \Cthree\, is given by the condition:
\[\forall a \in A, \forall i,i' \in I, |i-a|\leq e, \exists a' \in I, |a-a'| \leq k_s |i-i'| + \epsilon_s \AND 
 |i'-a'|\leq e \]
We can satisfy this property by setting $a'=a+i'-i$, $k_s=1$, and
$\epsilon_s=0$.

\paragraph{\Cfour: $\forall a,a' \in A,\forall i \in I, S(i,a) \AND S(i,a') \implies d_A(a,a') \leq \epsilon_t$}
\Cfour \, can be rewritten, once we instantiate $S(i,a)$ to 
\[\exists \epsilon_t, \forall a,a' \in A,\forall i \in I, |i-a|\leq e \AND |i-a'| \leq e \implies |a-a'| \leq \epsilon_t \]
Which is true for $\epsilon_t = 2e$.


\section{Example: Dijkstra's shortest path algorithm}\label{sec:dijkstra}

In this section we apply our method   to Dijkstra's shortest path algorithm. This is an algorithm that, given a graph, 
computes the shortest path between a source and any vertex of the
graph. We will prove, by instantiating our schema,  that the following program  implementing the Dijkstra's algorithm can be proved $P_{1,0,0}$
in the semantic of real numbers using our theorem. 

In the following program we use some conventions: the number of
vertices is  fixed to $w$, all 
vertices are connected,  and the maximum  value for a path is $999$ (some
stand-in of infinity).


{
\lstinputlisting[language=C]{dijkstra_simplified.c}}

\subsection{Scheme instantiation}
To apply our theorem, we have to instantiate the scheme variables  $A$, $B$, $C$ with some variables of the program.  The variables of $I$ are instantiated with the variables that represent the input. 
We choose the following instantiation: $A$ contains the variables $\cnt$ and $\mrk$, $B$ the
array of double $\pe$ and $C$ the variable $u$ which identify the
current vertex to propagate. 

{\scriptsize\begin{lstlisting}
A := int count;int mark[w];
B := pathestimate[w];
C := int u;
I := graph[w][w];
\end{lstlisting}}

We now have to choose a suitable metric on the types of the variables, and we choose the following: $d_I$ is the $L_1$ norm on an
array of real numbers, $d_B$ is the $L_\infty$ norm on array of real numbers  and $d_A$ is the identity metric: that is, the
distance between two elements of $A$ is  $0$ if they are the same
elements and it is $\infty$ otherwise.  

Next, we  identify the stopping condition:
\begin{lstlisting}
S(graph,<count,mark>) := count >= w
\end{lstlisting}
Finally, we identify the functions $M(a,c)$, $N(i,b,c)$, $O(a,b,c,i)$
with the following regions of code:
{
\lstinputlisting[language=C]{dijkstra_matching.c}}

We now have to prove that the conditions \Cone, \Ctwo, \Cthree\, and
\Cfour\, hold for the given instantiations. 

\subsection{Proof of the conditions}

\paragraph{\Cone: $\forall l \in C^*. P_{\Kn,\En,\delta}(\lambda z. \foob(l,z))$}
For all $i0\in I$, $\fooa(i0,i)$ is  $k$-Lipschitz and $k$ does not depend on $i0$.
The proof of this condition can be done by using standard technical (such as Hoare
triples or abstract interpretation) on the following program.
{
\lstinputlisting[language=C]{dijkstra_N.c}}
In an exact semantics (with real numbers), this program is $1$-Lipschitz as any element of $\pe$ is  the sum of some elements of $\graph$.
If the analysis is done with an exact semantics (with real numbers), we
are able to prove that this program is $1$-Lipschitz.

\paragraph{\Ctwo: $\forall i_1,i \in I, d_I(i,i_1) \leq \delta \implies
d_B( \fooB(i,i) , \fooB(i_1,i)) \leq k_A  d_A( \fooA(i_1) , \fooA(i)) + \epsilon_2$}
The proof for \Ctwo\ is rather technical. The basic idea is however
quite simple. Indeed, the $A$ structure is a set in a discrete space
on which elements are added. So we prove that whatever the order of
the element is $B$ is constant. This is done by showing that local
transpositions do not change the result. So the principle should
apply in other algorithms with the same $A$ structure. 
The complete proof can be found in the appendix of \cite{gazeau12hal}.

\paragraph{\Cthree: $\forall a \in A, \forall i,i' \in I,
d_I(i,i') \leq \delta \AND 
S(i',a) \implies \exists a' \in A, d_A(a,a') \leq k_s d_I(i',i) + \epsilon_s \AND 
S(i,a')$}
Since the instantiation of $S(i',a)$ is \lstinline{count >= w}, 
the stopping condition does not depend on $i$ (when the number of nodes $w$ is fixed).
Hence, the formula is satisfied for $a'= a$ with the constant $k_s=0$ and $\epsilon_s=0$.

\paragraph{\Cfour: $\forall a,a' \in A,\forall i \in I, S(i,a) \AND S(i,a') \implies d_A(a,a') \leq \epsilon_t$}
Since $\{ a | S(i,a)\}$ is a singleton for every $i$ (it corresponds to the state where all the nodes are marked), the property
holds for $\epsilon_t=0$.

\section{Related Work}\label{sec:related}


Static analysis via abstract interpretation can be an effective method
for deriving precise bounds on deviations
\cite{goubault01sas,goubault11vmcai}.  Since such static analysis is
generally limited to analyzing code line-by-line, significant over
approximations might be necessary.  For example, when encountering an
``if'' instruction (or a looping construct), a static analyzer will
have to assume that either the control flow is not perturbed by the
finite-precision errors (often unrealistic) or the results from the
two branches of the conditional must be merged (often causing
significant over-approximation).  In our examples here, control flow
can be perturbed a great deal by precision errors and merging both
branches is not a solution as the program is not locally continuous.
Our method is useful for solving this problem since it avoids narrowly
analyzing the semantics of the conditional.


In the two papers \cite{chaudhuri11sigsoft,chaudhuri10popl},
robustness analysis is done for the Dijkstra's algorithm. The authors
split their analysis into two parts: first they prove the continuity
of the algorithm and second they prove it is piecewise robust. The
problem of discontinuity that can occur at some point of the execution
is solved through an abstract language syntax for loops.  Like in our
theorem, this syntax need additional conditions (mainly the
commutativity for two observable equivalent commands).  However, their
abstract language is more specific than our theorem: CORDIC is not in
the scope of these papers which also means their conditions are
simpler and their proofs are more directed than ours.  The other
distinction is in the semantics of the language. Their paper aims at
furnishing the whole semantics which is an exact one and computational
errors are treated qualitatively with the argument that a robust
program is not sensitive to small variations. With our analysis, we
give a quantitative definition of what small enough means.  The last
difference is our design for analyzing non-local-robustness.  We prefer
to consider non-local behaviors as happening and solving them by a program
transformation using pattern than to rewrite the program in a syntax
that hide the non-local behavior.

\section{Future work and conclusion}\label{sec:conclude}

We have presented a theorem that allows us to prove the robustness of some
floating point programs.  This theorem is abstract enough to be
applicable in a number of rather different programs: here, we
illustrate its use with programs to compute cosine using the CORDIC
method and to compute the shortest path in a graph. 



For future work, we would like to address a key possible weakness of
our method: it is currently tied to a particular template.  Although
that template is presented abstractly, there should certainly be ways
to improve the generality beyond the matching of a template.  Also,
since the property $P_{k,\epsilon,\delta}$
(Definition~\ref{def:k-epsilon-delta}) is more general than both
$k$-Lipschitz and the other definitions of robustness
\cite{majumdar09rtss,majumdar10memocode}, we would like to explore
applications of this property to cases where neither of the other
definitions work.

Condition \Ctwo\ is, at least in the examples considered in this paper,
the most difficult condition to verify.  This suggests that we might
consider more restrictive conditions that would entail \Ctwo.

\paragraph{Acknowledgments:}  We would like to thank Eric Goubault and
Jean Goubault-Larrecq for many useful discussions on the topic of this
paper and for the helpful comments of the anonymous reviewers. 

\bibliographystyle{eptcs}
\bibliography{bib}
\end{document}